\documentclass[12pt]{article}
\usepackage{amsmath,amsfonts}
\usepackage{amsthm}
\usepackage{amssymb}
\usepackage{algorithm}
\usepackage[noend]{algorithmic}

% Line numbers are helpful for refereeing
%\usepackage[mathlines]{lineno}
%\linenumbers

\newtheorem{theorem}{Theorem}
\newtheorem{lemma}{Lemma}

\newtheorem{claim}{Claim}

\def\Real{\mathbb R}
\def\Natural{\mathbb N}
\def\cover{\text{Cover}}
\def\xcluster{\text{XC}}
\def\ycluster{\text{YC}}
\def\Linfty{l_{\infty}}
\def\disk{\delta}

\def\primary{\overline{Y}}
\def\largest{\text{largest}}
\def\oc{\text{OuterCover}}

%% Begin Import for/from arxiv%%
\usepackage{paralist}%
\usepackage[cm]{fullpage}%
\usepackage{color}%
\usepackage{xspace}%

\usepackage[pdfpagelabels,plainpages=false,hypertexnames=false]{hyperref}%
\hypersetup{%
   breaklinks,%
   ocgcolorlinks,
   colorlinks=true,%
   linkcolor=[rgb]{0.45,0.0,0.0},%
   urlcolor=[rgb]{0.05,0.390,0.0},%
   citecolor=[rgb]{0,0,0.45}
}%

\usepackage{pifont}%
\usepackage[symbol*]{footmisc}

\newcommand*\circled[1]{\footnotesize\tikz[baseline=(char.base)]{
    \node[shape=circle,draw,inner sep=0.2pt] (char) {#1};}}

\DefineFNsymbols*{stars}[text]{%
   {\ding{172}} %
   {\ding{173}} %
   {\ding{174}} %
   {\ding{175}} %
   {\ding{176}} %
   {\ding{177}} %
   {\ding{178}} %
   {\ding{179}} %
   {\ding{180}} %
   {{\protect\circled{{\tiny 10}}}}%
   {{\protect\circled{{\tiny 11}}}}%
   {{\protect\circled{{\tiny 12}}}}%
   {{\protect\circled{{\tiny 13}}}}%
   {{\protect\circled{{\tiny 14}}}}%
   {{\protect\circled{{\tiny 15}}}}%
   {{\protect\circled{{\tiny 16}}}}%
   {{\protect\circled{{\tiny 17}}}}%
   {{\protect\circled{{\tiny 18}}}}%
   {{\protect\circled{{\tiny 19}}}}%
   {{\protect\circled{{\tiny 20}}}}%
   {{\protect\circled{{\tiny 21}}}}%
   {{\protect\circled{{\tiny 22}}}}%
}%
%% End Import %%

\begin{document}
\title{A Constant-Factor Approximation for Multi-Covering with Disks\thanks{A preliminary
   version of this article appeared as \cite{Bhowmick2013Constant}.}
}
  
\author{%
  Santanu~Bhowmick%
  \thanks{Department of Computer Science; University of Iowa; Iowa City, USA; 
          \url{santanu-bhowmick@uiowa.edu}}\,
  \and%
  Kasturi~Varadarajan%
  \thanks{Department of Computer Science; University of Iowa; Iowa City, USA;
          \url{kasturi-varadarajan@uiowa.edu}}\,
  \and%
  Shi-Ke~Xue%
  \thanks{Department of Computer Science; Massachusetts Institute of Technology; Cambridge, USA;
          \url{shikexue@mit.edu}}
}%

\maketitle
% Do not delete - Reset footnotes to numbers
\setfnsymbol{stars}

\begin{abstract}
 We consider variants of the following multi-covering problem with disks. We are
given two point sets $Y$ (servers) and $X$ (clients) in the plane, a coverage 
function $\kappa :X \rightarrow \Natural$, and a constant $\alpha \geq 1$. Centered
at each server is a single disk whose radius we are free to set. The requirement
is that each client $x \in X$ be covered by at least $\kappa(x)$ of the server disks.
The objective function we wish to minimize is the sum of the $\alpha$-th powers of the disk radii. We 
present a polynomial time algorithm for this problem achieving an $O(1)$ approximation. 
\end{abstract}

\section{Introduction}
We begin with the statement of the problem studied in this article.
We are given two point sets $Y$ (servers) and $X$ (clients) in the plane,
a coverage function $\kappa: X \rightarrow \Natural$, and a constant $\alpha \geq 1$. 
An assignment $r: Y \rightarrow \Real^+$ of {\em radii} to the points in $Y$ 
corresponds to ``building'' a disk of radius $r_y$ centered at each $y \in Y$.
For an integer $j \geq 0$, let us say that a point $x \in X$ is $j$-covered 
under the assignment if $x$ is contained in at least $j$ of the disks, i.e. 
\[| \{ y \in Y \ | \ ||y-x||_2 \leq r_y \}| \geq j\] 
The goal is to find an assignment that $\kappa(x)$-covers each point $x \in X$ 
and minimizes $\sum_{y \in Y}r_y^{\alpha}$. We call this the {\em non-uniform
minimum-cost multi cover} problem (non-uniform MCMC problem).

We are interested in designing a polynomial time 
algorithm that outputs a solution whose cost is at 
most some factor $f \geq 1$ times the cost of an optimal solution. We call such 
an algorithm an $f$-approximation, and it is implicit that the algorithm is actually polynomial-time. 

The version of this problem where $\kappa(x) = k, \ \forall x \in X$, for some
given $k > 0$, has received particular attention. Here, all the clients have the same coverage requirement of $k$. We will refer to this as the {\em uniform MCMC} problem. In the context of the uniform MCMC, we will refer to a $j$-cover
as an assignment of radii to the servers under which each client is $j$-covered. 

\subsection{Related Work}
In the rest of this section, we will focus on the uniform MCMC problem, and be specific when remarking on generalizations to the non-uniform problem. 
The (uniform) MCMC problem was considered in two recent papers, motivated by fault-tolerant
sensor network design that optimizes energy consumption.  Abu-Affash et al.~\cite{ACKM} considered the case $\alpha = 2$, which corresponds to minimizing the sum of the areas of the server disks. They gave an $O(k)$ approximation for the problem using mainly geometric ideas. Bar-Yehuda 
and Rawitz \cite{Yehuda2013Note} gave another algorithm that achieves the same approximation 
factor of $O(k)$ for any $\alpha$, using an analysis based on the local ratio technique. The central question that we investigate in this article is whether an approximation guarantee that is independent of $k$ is possible. 

There is a considerable amount of work on clustering and covering problems related
to the MCMC problem, and we refer the reader to the previous papers for a detailed 
survey \cite{ACKM, Yehuda2013Note}. Here, we offer a view of some of that work from 
the standpoint of techniques that may be applicable to the problem at hand. For the 
case $k = 1$ of the problem, constant factor approximations can be obtained using 
approaches based on linear programming, and in particular, the primal-dual method 
\cite{Charikar2001Clustering,Freund2004Combinatorial}. The $O(k)$ approximation of 
Bar-Yehuda and Rawitz \cite{Yehuda2013Note} for $k > 1$ can be situated in this line 
of work.

There has been some recent work on the geometric set multi-covering problem 
\cite{Chekuri2009Set,Bansal2012Weighted}. In particular, the recent work of Bansal 
and Pruhs \cite{Bansal2012Weighted} addresses the following problem. We are a given 
a set of points in the plane, a set of disks each with an arbitrary non-negative 
weight, and an integer $k$. The goal is to pick a subset of the disks so that each 
of the given points is covered at least $k$ times. The objective function we want 
to minimize is the sum of the weights of the chosen disks. Bansal and Pruhs 
\cite{Bansal2012Weighted} give an $O(1)$ approximation for the problem, building 
on techniques developed for the case $k = 1$ \cite{Varadarajan2010Weighted, Chan2012Weighted}. 

It would seem that the problem considered in this paper can be reduced to the problem 
solved by Bansal and Pruhs: for each $y \in Y$ and $x \in X$, add a disk centered at 
$y$ with radius $||x-y||_2^{\alpha}$, and let $X$ be the set of points that need to be covered. 
The reason this reduction does not work is that we have to add an additional constraint 
saying that we can use only one disk centered at each $y \in Y$. Notice that this 
additional constraint is not an issue for the case $k = 1$, since here if the returned 
solution uses two disks centered at the same $y \in Y$, we can simply discard the smaller one.

In the geometric set cover problems considered by \cite{Chekuri2009Set, 
Varadarajan2010Weighted, Chan2012Weighted, Bansal2012Weighted}, the
input disks are ``immutable'', and the complexity of the problem stems from the
combinatorial geometry of the disks. For the MCMC application, it would be
more fruitful to consider geometric set multi-cover problems where the algorithm
is allowed to slightly enlarge the input disks. This version of covering with
$k = 1$ is considered by Har-Peled and Lee \cite{HarPeled2012Weighted}. For $k > 1$,
however, we still have the above-mentioned difficulty of reducing MCMC to a set multi-cover problem.

The case $k = 1$ of our MCMC problem actually admits a polynomial time 
approximation scheme (PTAS) using dynamic programming on top of randomly 
shifted quad-trees \cite{Erlebach2005PolynomialTime,Chan2003Polynomialtime}. 
This was shown by the work of Bilo et al. \cite{Bilo2005Geometric}, following 
the work of Lev-Tov and Peleg \cite{Tov2005Polynomial} for $\alpha = 1$. 
The difficulty with extending these results for $k = 1$ to general $k$ is that 
the ``density'' of the solution grows with $k$, and therefore the number of 
sub-problems that the dynamic program needs to solve becomes exponential in $k$. 
It is conceivable that further discretization tricks \cite{HarPeled2012Weighted}  
can be employed to get around this difficulty, but we have not succeeded in this 
effort. On the other hand, we are also not aware of any hardness result that rules 
out a PTAS. The problem is known to be NP-hard even for $k = 1$
and any $\alpha > 1$ \cite{Bilo2005Geometric, Alt2006}.

\subsection{Our Results}

In this article, we obtain an $O(1)$ approximation for the uniform MCMC problem. That is, we demonstrate an approximation bound that is independent of $k$. 

Our approach revolves around the notion of an {\em outer cover}. This is an
assignment of radii to the servers under which each client $x \in X$ is
covered by a disk of radius at least $||y^k(x)-x||_2$, where $y^k(x)$ is the $k$-th nearest neighbor of $x$ in $Y$. To motivate the notion, consider any
$k$-cover, and in particular, the optimal one. Consider the set of disks
obtained by picking, for each client $x \in X$, the largest disk covering $x$
in the $k$-cover. (Several clients can contribute the same disk.) This set
of disks is seen to be an outer cover.

We provide a mechanism for extending any $(k-1)$-cover to a 
$k$-cover so that the increase in objective function cost is bounded by
a constant times the cost of an optimal outer cover. This naturally
leads to our algorithm in Section~\ref{sec:algo} -- recursively compute a $(k - 1)$-cover
and then extend it to a $k$-cover. To bound its approximation ratio, we
argue in Section \ref{sec:analysis} that the optimal solution can be partitioned into a $(k-1)$-cover
and another set of disks that is almost an outer cover. Finally, we need
a module for computing an approximately optimal outer cover. We show in
Section \ref{sec:outercover} that
an existing primal-dual algorithm for $1$-covering can be generalized
for this purpose.

The idea of an outer cover has its origins in the notion of {\em primary disks} used by Abu-Affash et al.~\cite{ACKM}. Our work develops the idea and its 
significance much further, and this is partly what enables our $O(1)$ approximation bound.
  
Our algorithm and approximation guarantee of $O(1)$ works for the non-uniform MCMC problem as well. We therefore present our work in this slighly more general setting. 

\section{Preliminaries}
\label{sec:terminology}
For convenience, we solve the variant of the non-uniform MCMC problem where we have
$\Linfty$ disks rather than $l_2$ disks.
Our input is two point sets $Y$ and $X$ in $\Real^2$, a coverage function $\kappa: X \rightarrow \Natural
\cup \{0\}$, and the constant $\alpha \geq 1$. (It will be useful to allow $\kappa(x)$ to be $0$ for some $x \in X$.) We also assume that
$\kappa(x) \leq |Y|$ for each $x \in X$, for otherwise there is no feasible solution.

We describe 
an algorithm for assigning a {\em radius} $r_y \geq 0$ for each $y \in Y$,
with the guarantee that for each $x \in X$, there are at least $\kappa(x)$ points 
$y \in Y$ such that the $\Linfty$ disk of radius $r_y$ centered at $y$ contains $x$. In other words the guarantee is that for each $x \in X$,
\[ | \{ y \in Y \ | \ ||x-y||_{\infty} \leq r_y \} | \geq \kappa(x)\]

Our objective is to minimize $\sum_{y \in Y} r_y^{\alpha}$. For this optimization problem, we will show
that our algorithm outputs an $O(1)$ approximation. Clearly, this also
gives an $O(1)$ approximation for the original problem, where distances
are measured in the $l_2$ norm. We will use 
$||\cdot||$ to denote the $\Linfty$ norm. 

For each $x \in X$, fix an ordering of the points in $Y$ that is non-decreasing
in terms of $\Linfty$ distance to $x$. For $1 \leq j \leq |Y|$, let $y^j(x)$
denote the $j$-th point in this ordering. In other words, $y^j(x)$ is the
$j$-th closest point in $Y$ to $x$. For brevity, we denote $y^{\kappa(x)}(x)$ by $y^{\kappa}(x)$.

Let $\disk(p,r)$ denote the $\Linfty$ disk of radius $r$ centered at $p$.
The {\em cost} of a set of disks is defined to the sum of the $\alpha$-th
powers of the radii of the disks. The cost of an assignment of radii to the
servers is defined to be the cost of the corresponding set of disks.

\section{OuterCover: Algorithm to generate a preliminary cover}
\label{sec:outercover}
Given $X' \subseteq X$, $Y$, $\kappa$ and $\alpha \geq 1$, an {\em outer cover}
is an assignment $\rho: Y \rightarrow \Real^+$
of radii to the servers such that
for each client $x \in X'$, there is a server $y \in Y$ such that
\begin{enumerate}
	\item The disk $\delta(y, \rho_y)$ contains $x$
	\item Disk radius $\rho_y \geq ||x - y^{\kappa}(x)||$
\end{enumerate}

Our goal in this section is to compute an outer cover that minimizes the cost $\sum_y \rho_y^{\alpha}$.
In the rest of this section, we describe and analyze a procedure
$\oc(X',Y,\kappa, \alpha)$ that returns an outer cover 
$\rho: Y \rightarrow \Real^+$ whose cost is $O(1)$ times that of an optimal
outer cover. Since this result is used as a black box in our algorithm for
the non-uniform MCMC, the remainder of this section could be skipped on
a first reading.

The procedure $\oc(X',Y,\kappa, \alpha)$ is implemented via a modification of 
the primal-dual algorithm of Charikar and Panigrahy~\cite{Charikar2001Clustering}.
Note that their algorithm can be viewed as solving the case where $\kappa(x)
= 1$ for each $x \in X'$. As we will see, their algorithm and analysis readily
generalize to the problem of computing an outer cover.

\subsection{Linear Programming Formulation}
We begin by formulating the problem of finding an optimal outer cover as 
an integer program. 
For each server $y_i \in Y$ and radius $r \geq 0$, let $z_i^{(r)}$ be an indicator variable that denotes whether 
the disk $\disk(y_i,r)$ is chosen in the outer cover.\footnote{For a server
$y_i \in Y$, only the disks whose radius is from the
set $\{ ||y_i - x_j|| \ \mid \ x_j \in X' \}$ will play a role in much of our
algorithm. For describing the algorithm, however, it will be convenient to
allow any $r \geq 0$.}   
For any server $y_i \in Y$ and client $x_j \in X'$, we define the
\textit{minimum eligible radius} $R_{\min}(y_i, x_j)$ to be:
\[
    R_{\min}(y_i, x_j) = \max (||y_i- x_j||, ||y^{\kappa}(x_j) - x_j||)
\]

A disk centered at $y_i$ serves $x_j$ in an outer cover exactly when its
radius is at least $R_{\min}(y_i, x_j)$. Finally, let $C_i(r) = \{ x_j \in X' \ \mid \ r \geq R_{\min}(y_i,x_j)\}$. The set $C_i(r)$ consists of those clients
that $\disk(y_i,r)$ can serve. 

The problem of computing an optimal outer cover is that of minimizing
\begin{equation}
   \sum_{i,r} r^{\alpha} \cdot z_i^{(r)},
\end{equation}
subject to the constraints
\begin{eqnarray}
\label{eq:CovConstraint}
   \sum_{i,r: x_j \in C_i(r)} z_i^{(r)} & \geq & 1, \ \forall x_j \in X'\\
\label{eq:IntConstraint}
z_i^{(r)} & \in & \{0, 1\}, \ \forall i,r.
\end{eqnarray}

The first constraint, equation (\ref{eq:CovConstraint}), represents the condition
that for every client $x_j \in X'$, at least one disk that is capable
of serving it is chosen. The second constraint, equation (\ref{eq:IntConstraint}),
models the fact that the indicator
variables $z_i^{(r)}$ can only take boolean values $\{0, 1\}$. By relaxing the indicator
variables to be simply non-negative, i.e.
\begin{equation}
\label{eq:LPConstraint}
z_i^{(r)} \geq 0, \ \forall i,r,
\end{equation}
we get a linear program (LP), which we call the primal LP for the problem.

The dual of the above LP has a variable $\beta_j$ corresponding to every client $x_j \in X'$. The dual LP seeks to maximize

\begin{equation}
   \sum_{x_j \in X'} \beta_j,
\end{equation}

subject to the constraints
\begin{eqnarray}
\label{eq:dual_constraint}
\sum_{x_j \in C_i(r)} \beta_j & \leq & r^{\alpha}, \ \forall y_i, r\\
   \beta_j & \geq & 0, \ \forall x_j \in X'
\end{eqnarray}

\subsection{A Primal Dual Algorithm}
The primal dual algorithm is motivated by the above linear program. The 
algorithm maintains a dual variable $\beta_j$ for each client $x_j$. This
variable will always be non-negative and satisfy the dual constraints
(\ref{eq:dual_constraint}). If at some point in the algorithm, the
dual constraint (\ref{eq:dual_constraint}) holds with equality for some
$y_i$ and $r$, the disk $\disk(y_i,r)$ is said to be {\em tight}. A client
$x_j$ is said to be tight if there is some tight disk $\disk(y_i,r)$ such
that $x_j \in C_i(r)$. (Note that $\beta_j$ is then part of the dual constraint
(\ref{eq:dual_constraint}) that holds with equality.)

Our algorithm, $\oc(X',Y,\kappa,\alpha)$, initializes each $\beta_j$ to $0$,
which clearly satisfies (\ref{eq:dual_constraint}). The goal of the while
loop in lines 1 and 2, which we refer to as the {\em covering phase} of the
algorithm, is to ensure that each client in $X'$ becomes tight, that is, covered by
some tight disk. It is easy to see that the covering phase achieves this. We
note in passing that since the $\beta_j$ are never decreased in the covering
phase, a client or disk that becomes tight at some point remains tight for 
the rest of the phase. 

 \begin{algorithm*}[hbt]
 \caption{$\oc(X',Y,\kappa,\alpha)$}
 \begin{algorithmic}[1]
 \WHILE {$\exists\ x_j \in X'$ that is not tight}
 \STATE Increase the non-tight variables $\beta_j$ arbitrarily till some constraint in (\ref{eq:dual_constraint}) becomes tight.
 \ENDWHILE
 \STATE Let $T$ be the set of tight disks.
 \STATE $F \leftarrow \varnothing$
 \WHILE {$T \neq \varnothing$}
        \STATE $\disk(y_i,r) \leftarrow$ The disk of largest radius in $T$
        \STATE $N \leftarrow$ Set of disks that intersect $\disk(y_i,r)$
        \STATE $F \leftarrow F \cup \{\disk(y_i,r)\}$
        \STATE $T \leftarrow T\ \backslash\ N$
 \ENDWHILE
 \STATE Assign $\rho: Y \rightarrow \Real^+$ as follows:
 \[
     \forall\ y_i \in Y, \rho(y_i) = 
     \begin{cases}
          3r, & \mbox{if } \disk(y_i,r) \in F \\
          0,  & \mbox{if } F \mbox{ contains no disk centered at } y_i 
     \end{cases}
 \]       
 \end{algorithmic}
\end{algorithm*} 

Steps 3--9 constitute the {\em coarsening phase} of the algorithm. This phase
starts with the set $T$ of tight disks computed by the covering phase. It computes a subset $F \subseteq T$ of pairwise disjoint disks by considering the disks 
in $T$ in non-increasing order of radii, and adding a disk to $F$ if it does
not intersect any previously added disk.

Step 10 constitutes the {\em enlargement phase}. Each disk in $F$ is expanded
by a factor of $3$, and the resulting set of disks is returned by the algorithm.
Note that for $y_i \in Y$, $F$ contains at most one disk centered at $y_i$; thus
the assignment in Step 10 is well defined.

We argue that the disks returned by $\oc(X', Y, \kappa, \alpha)$ form an
outer cover. Consider any client $x_j \in X'$. Since $x_j$ is tight at the end
of the covering phase, there is a tight disk $\disk(y_i,r)$ such that $x_j
\in C_i(r)$. Thus $x_j$ is served in case  $\disk(y_i,r)$ was added to $F$
in the coarsening phase. 
If $\disk(y_i,r)$ was not added to $F$, then it must have been intersected by some disk $\disk(y_{i'},r')$ that was added to $F$, such that $r' \geq r$. Clearly,
$x_j \in \disk(y_{i'},3r')$. Furthermore, $3r' \geq r \geq ||y^{\kappa}(x_j) - x_j||$. Thus, $x_j \in C_{i'}(3r')$, and $x_j$ is served by the output of $\oc(X', Y, \kappa, \alpha)$.

\subsection{Approximation Ratio}
Let the set of disks in an optimal outer cover be denoted by $OPT$.
We now show that the cost of the outer cover returned by 
$\oc(X', Y, \kappa, \alpha)$ is at most $3^{\alpha} \cdot \text{cost}(OPT)$. We begin
by lower bounding $\text{cost}(OPT)$ in terms of the $\beta_j$. We
have 

\begin{equation}
\label{eq:LPTight}
\text{cost}(OPT) \geq \sum_{\disk(y_i,r) \in OPT} \left( \sum_{x_j \in C_i(r)} \beta_j \right)
\geq \sum_{x_j \in X'} \beta_j.
\end{equation}

The first inequality follows because the $\beta_j$ satisfy (\ref{eq:dual_constraint}); the second is because each client in $X'$ is served by at least one disk
in  $OPT$, and the $\beta_j$ are non-negative.

Let $C$ denote the cost of the solution returned by $\oc(X', Y, \kappa, \alpha)$. We have
\begin{equation*}
C = 3^{\alpha} \cdot \text{cost}(F) = 3^{\alpha} \sum_{\disk(y_i,r) \in F} \left( \sum_{x_j \in
C_i(r)} \beta_j \right) \leq 3^{\alpha} \sum_{x_j \in X'} \beta_j \leq 3^{\alpha} 
\cdot \text{cost}(OPT).
\end{equation*} 
Here, the second equality is because each disk in $F$ is tight; since the disks
in $F$ are pairwise disjoint, each client $x_j \in X'$ is contained in at
most one disk in $F$, from which the next inequality follows; the final
inequality is due to Inequality~(\ref{eq:LPTight}).

Thus, we may conclude:

\begin{lemma}
\label{lem:oc}
The algorithm $\oc(X', Y, \kappa,\alpha)$ runs in polynomial time and 
returns an outer cover whose cost is at most  $3^{\alpha}$ times that of
an optimal outer cover.
\end{lemma}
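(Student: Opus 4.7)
The plan is to break the lemma into three pieces: (i) the output is indeed an outer cover, (ii) its cost is at most $3^\alpha \cdot \text{cost}(OPT)$, and (iii) the algorithm runs in polynomial time. Parts (i) and (ii) have been sketched in the subsections preceding the lemma statement, so my proof will mainly assemble them carefully and add a short argument for (iii).

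For (i), I would fix an arbitrary client $x_j \in X'$ and trace it through the three phases. At the end of the covering phase $x_j$ is tight, so some tight disk $\disk(y_i,r)$ satisfies $x_j \in C_i(r)$; by definition this means $r \geq R_{\min}(y_i,x_j) \geq \|x_j - y^\kappa(x_j)\|$. If $\disk(y_i,r) \in F$, then $\disk(y_i,3r)$ obviously covers $x_j$ and its radius is at least $\|x_j - y^\kappa(x_j)\|$. Otherwise, $\disk(y_i,r)$ was removed during coarsening because some $\disk(y_{i'},r')\in F$ with $r' \geq r$ intersected it. Since the metric is $\Linfty$, any two intersecting disks of radii $r \leq r'$ satisfy $\disk(y_i,r) \subseteq \disk(y_{i'},3r')$, so $x_j \in \disk(y_{i'},3r')$, and $3r' \geq r \geq \|x_j - y^\kappa(x_j)\|$. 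Either way the assignment $\rho$ serves $x_j$.

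For (ii) I would reproduce the displayed chain of inequalities from the Approximation Ratio subsection. Weak duality, applied to the dual-feasible $\beta_j$'s maintained by the algorithm, gives
\[
\text{cost}(OPT) \;\geq\; \sum_{\disk(y_i,r)\in OPT}\sum_{x_j \in C_i(r)} \beta_j \;\geq\; \sum_{x_j \in X'}\beta_j,
\]
the second step because each $x_j \in X'$ is served by at least one disk of $OPT$. On the other side, every $\disk(y_i,r) \in F$ is tight, so $r^\alpha = \sum_{x_j \in C_i(r)} \beta_j$; pairwise disjointness of $F$ ensures that each $\beta_j$ appears in at most one of these sums, giving $\text{cost}(F) \leq \sum_j \beta_j$. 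Multiplying by $3^\alpha$ yields the claimed ratio.

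For (iii), each iteration of the covering while-loop makes at least one new constraint tight, and only the $O(|Y|\cdot|X'|)$ disks of radius $R_{\min}(y_i,x_j)$ are relevant, so the covering phase terminates after polynomially many iterations; each iteration needs only a minimum computation over non-tight constraints. Coarsening and enlargement are trivially polynomial. The main delicacy I expect is in (i): one must verify that the factor $3$ simultaneously handles the geometric ``intersect $\Rightarrow$ contain after blowup'' step and the outer-cover radius lower bound. Both follow because membership of $x_j$ in $C_i(r)$ already encodes $r \geq \|x_j - y^\kappa(x_j)\|$, which is why the Charikar--Panigrahy analysis transfers to outer covers essentially verbatim.
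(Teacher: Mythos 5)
Your proposal is correct and follows essentially the same route as the paper: the outer-cover property is verified by tracing a fixed client through the covering, coarsening, and enlargement phases exactly as in the paper, and the $3^{\alpha}$ cost bound is the paper's own chain of inequalities via dual feasibility, tightness of the disks in $F$, and their pairwise disjointness. Your added remarks --- that intersecting disks of radii $r \leq r'$ blow up into $\disk(y_{i'},3r')$, and that only the $O(|Y|\cdot|X'|)$ constraints at the critical radii $R_{\min}(y_i,x_j)$ matter for termination --- are correct fillings-in of details the paper leaves implicit.
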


\section{Computing a covering for the non-uniform MCMC problem}
\label{sec:algo}
With our algorithm for computing an outer cover in place, we now address
the non-uniform MCMC problem. Recall that the input is a client set $X$,
a server set $Y$, a coverage function $\kappa: X \rightarrow \Natural
\cup \{0\}$, and the constant $\alpha$.  

Given an assignment of radius $r_y$ to each $y \in Y$, we will say that
a point $x \in X$ is {\em $j$-covered} if at least $j$ disks cover it,
that is, 
\[ | \{ y \in Y \ | \ ||x-y|| \leq r_y \} | \geq j.\]
We will sometimes say that $x$ is
$\kappa$-covered to mean that it is $\kappa(x)$-covered. Similarly, if we have a assignment of radii to each 
$y \in Y$  such that for a set of points $P \subseteq X$, every point $x \in P$ is covered
by at least $\kappa(x)$ disks, we say that $P$ is $\kappa$-covered.

 \begin{algorithm*}[h]
 \caption{$\cover(X,Y,\kappa,\alpha)$}
 \label{alg:MC}
 \begin{algorithmic}[1]
   \IF {$\forall x \in X, \kappa(x) = 0$}
       \STATE Assign $r_y \leftarrow 0$ for each $y \in Y$, and return.
   \ENDIF
   \STATE Define $\kappa ' (x)$ as follows:
   \[ 
      \forall x \in X, \kappa '(x) = 
         \begin {cases}
            0, & \mbox{if } \kappa(x) = 0 \\
            \kappa (x) - 1, & \mbox{if } \kappa(x) > 0
         \end {cases}
   \]
   \STATE Recursively call $\cover(X,Y,\kappa',\alpha)$. 

   \STATE Let $X' \leftarrow \{x \in X \ | \ x \text{ is not } \kappa(x)\text{-covered } \}$
   \STATE Call the procedure $\oc(X',Y,\kappa,\alpha)$ to obtain an
          outer cover $\rho: Y \rightarrow \Real^+$.
   \STATE Let $Y' \leftarrow Y$.
   \STATE Let $\primary \leftarrow \varnothing$.
   \WHILE {$X' \neq \varnothing$}
   \STATE Choose $\overline{y} \in Y'$.
   \STATE $\primary \leftarrow \primary \cup \{\overline{y}\}$.
   \STATE Let $\xcluster_{\overline{y}} \leftarrow \varnothing$, $\ycluster_{\overline{y}} \leftarrow \varnothing$.
     \FORALL {$x' \in X'$}
     \IF {$x' \in \disk(\overline{y}, \rho_{\overline{y}})$ and $\rho_{\overline{y}} \geq ||x' - y ^{\kappa}(x')||$}
           \STATE $\xcluster_{\overline{y}} \leftarrow \xcluster_{\overline{y}} \cup \{x'\}$.
           \STATE $\ycluster_{\overline{y}} \leftarrow \ycluster_{\overline{y}} \cup \{y^1(x'), y^2(x'), \ldots, y^{\kappa}(x')\}.$
        \ENDIF
     \ENDFOR
     \STATE Let $\ycluster'_{\overline{y}} \subseteq \ycluster_{\overline{y}}$ be a set of at most
            four points such that 
            \[ \bigcap_{y \in \ycluster'_{\overline{y}}} \disk(y, r_y) = \bigcap_{y \in \ycluster_{\overline{y}}} \disk(y, r_y). \]
     \STATE For each $y \in \ycluster'_{\overline{y}}$, increase $r_y$ by the smallest
            amount that ensures $\xcluster_{\overline{y}} \subseteq \disk(y,r_y)$.
     \STATE Remove $\overline{y}$ from $Y'$ and remove from $X'$ any points $x$ that are ${\kappa(x)}$-covered.
   \ENDWHILE 
\end{algorithmic}
\end{algorithm*} 

Our algorithm $\cover(X,Y,\kappa,\alpha)$ for non-uniform MCMC computes an assignment of radius $r_y$ to each server $y \in Y$ such that each client $x \in X$ is $\kappa(x)$-covered.
This algorithm is recursive, and in the base case we 
have $\kappa(x) = 0$ for each $x \in X$. In the base case, the radius $r_y$ is assigned to
$0$ for each $y \in Y$. Otherwise, we define 
\[ \kappa'(x) = \max \{0, \kappa(x) - 1\}, \mbox{ for each } x \in X,\]
and recursively call
$\cover(X,Y,\kappa',\alpha)$ to compute an assignment that $\kappa'(x)$-covers each $x \in X$.
We then compute $X' \subseteq X$, the set of points that are not
$\kappa(x)$-covered. We compute an outer cover $\rho: Y \rightarrow \Real^+$ 
for $X'$ using the procedure $\oc(X',Y,\kappa,\alpha)$ described in 
Section~\ref{sec:outercover}. For any client $x \in X'$, the outer cover has
a disk $\disk(y, \rho_y)$ that serves it. That is, $x$ is contained in
$\disk(y, \rho_y)$ and $\rho_y \geq ||x - y^{\kappa}(x)||$.

The goal of the while-loop is to increase some of
the $r_y$ to ensure that each $x \in X'$, which is 
currently $(\kappa(x) - 1)$-covered, is also $\kappa(x)$-covered.
To do this, we iterate via the while loop over each disk $\disk(\overline{y}, \rho_{\overline{y}})$ returned by $\oc(X',Y,\kappa,\alpha)$. We add all
points in $X'$ that are served in the outer cover by $\disk(\overline{y}, \rho_{\overline{y}})$ to a set $\xcluster_{\overline{y}}$. That is, $\xcluster_{\overline{y}}$ consists of all $x' \in X'$ that are contained in  $\disk(\overline{y}, \rho_{\overline{y}})$ and $\rho_{\overline{y}} \geq ||x' - y ^{\kappa}(x')||$.  The set $\ycluster_{\overline{y}}$ contains, for
each $x \in \xcluster_{\overline{y}}$, the ${\kappa(x)}$ nearest neighbors of $x$ in $Y$. For purposes of analysis, we add $\overline{y}$ to a set $\primary$ as well.

Next, we identify a set $\ycluster'_{\overline{y}} \subseteq \ycluster_{\overline{y}}$ of at most
$4$ points such that 
 \[ \bigcap_{y \in \ycluster'_{\overline{y}}} \disk(y, r_y) = \bigcap_{y \in \ycluster_{\overline{y}}} \disk(y, r_y). \]
Why does such a $\ycluster'_{\overline{y}}$ exist? If, on the one hand, the intersection of
disks $\bigcap_{y \in \ycluster_{\overline{y}}} \disk(y, r_y)$ is empty, then Helly's Theorem tells us that there are three disks (or maybe even two) whose intersection is empty. On the other hand,
if the intersection $\bigcap_{y \in \ycluster_{\overline{y}}} \disk(y, r_y)$ is non-empty, then
it is a rectangle (as these are $\Linfty$ disks) and therefore equal to the intersection of four of the disks.

We enlarge the radius $r_y$ of each $y \in \ycluster'_{\overline{y}}$ by the minimum
amount needed to ensure that $\xcluster_{\overline{y}} \subseteq \disk(y,r_y).$ We
argue that after this each point in $\xcluster_{\overline{y}}$ is $\kappa$-covered. To
see why, consider any $x' \in \xcluster_{\overline{y}}$. Notice that $|\ycluster_{\overline{y}}| \geq {\kappa(x')}$, since
the $\kappa(x')$ nearest neighbors of $x'$ are included in $\ycluster_{\overline{y}}$. Thus before the enlargement,
$x'$ does not belong to $\bigcap_{y \in \ycluster_{\overline{y}}} \disk(y, r_y)$.
(Recall that no point in $\xcluster_{\overline{y}}$ was $\kappa$-covered.) Therefore,
$x'$ does not belong to $\bigcap_{y \in \ycluster'_{\overline{y}}} \disk(y, r_y)$.
It follows that there is at least one
$y \in \ycluster'_{\overline{y}}$ such that $\disk(y,r_y)$ did not contain $x'$ before
the enlargement. As a consequence of the enlargement, $\disk(y,r_y)$
does contain $x'$. Since $x'$ was $(\kappa(x') - 1)$-covered before the enlargement, it is now ${\kappa(x')}$-covered.

After increasing $r_y$ for $y \in \ycluster'_{\overline{y}}$ as stated, we discard
from $X'$ all points that are now $\kappa$-covered. The discarded set contains
$\xcluster_{\overline{y}}$ and possibly some other points in $X'$. We remove $\overline{y}$ from $Y'$. We go back and
iterate the while loop with the new $X'$ and $Y'$.

Since any point in $X'$ as computed in Line~5 is served by some disk in the
outer cover, it appears in $\xcluster_{\overline{y}}$ in some iteration of
the while loop (if it has not already been $\kappa$-covered serendipitously). 
At the end of that iteration of the while loop, it gets $\kappa$-covered. 
Thus, when $\cover(X,Y,\kappa,\alpha)$ terminates, each point $x \in X$ is
${\kappa(x)}$-covered.

\section{Approximation Ratio}
\label{sec:analysis}
In this section, we bound the ratio of the cost of the solution returned by
$\cover(X,Y,\kappa,\alpha)$ and the cost of the optimal solution. For this
purpose, the following lemma is central. It bounds the increase in cost
incurred by $\cover(X,Y,\kappa,\alpha)$ in going from a $\kappa'$-cover
to a $\kappa$-cover by the cost of the outer cover $\rho$ for $X'$.

\begin{lemma}
The increase in the objective function $\sum_{y \in Y}r_y^{\alpha}$ from the
time $\cover(X,Y,\kappa',\alpha)$ completes to the time $\cover(X,Y,\kappa,\alpha)$ completes
is $4 \cdot 3^{\alpha} \cdot \sum_{y \in Y} \rho_y^{\alpha}$.
\label{lem:increase}
\end{lemma}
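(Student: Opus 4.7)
The plan is to charge, for each iteration of the while-loop (indexed by its chosen primary $\overline{y} \in \primary$), the increase in $\sum_y r_y^\alpha$ attributable to that iteration to $\rho_{\overline{y}}^\alpha$, and then sum over $\overline{y}$. Since each primary is removed from $Y'$ immediately after being used, the set $\primary$ consists of distinct servers, so $\sum_{\overline{y} \in \primary} \rho_{\overline{y}}^\alpha \leq \sum_{y \in Y} \rho_y^\alpha$, which will yield the final bound.

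The key geometric step is to show that for every $y \in \ycluster'_{\overline{y}}$, the radius $r_y$ at the \emph{end} of the iteration is at most $3\rho_{\overline{y}}$. This will follow from two applications of the triangle inequality. Since $\ycluster'_{\overline{y}} \subseteq \ycluster_{\overline{y}}$, any such $y$ is one of the $\kappa(x')$ nearest neighbors in $Y$ of some $x' \in \xcluster_{\overline{y}}$, giving $\|y - x'\| \leq \|y^\kappa(x') - x'\| \leq \rho_{\overline{y}}$, where the last inequality uses that $\overline{y}$ serves $x'$ in the outer cover (this is the guard in the for-loop that puts $x'$ into $\xcluster_{\overline{y}}$). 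Combined with $\|\overline{y} - x'\| \leq \rho_{\overline{y}}$, we get $\|\overline{y} - y\| \leq 2\rho_{\overline{y}}$. Then for any $x'' \in \xcluster_{\overline{y}}$, another triangle inequality gives $\|y - x''\| \leq \|y - \overline{y}\| + \|\overline{y} - x''\| \leq 3\rho_{\overline{y}}$. Since the enlargement step sets $r_y$ to the minimum value placing $\xcluster_{\overline{y}}$ inside $\disk(y, r_y)$, the updated $r_y$ is at most $3\rho_{\overline{y}}$.

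The rest is bookkeeping. Only servers in $\ycluster'_{\overline{y}}$ have their radii changed during the iteration, and $|\ycluster'_{\overline{y}}| \leq 4$ by Helly's theorem (as explained in the algorithm description). For each such $y$, the increase in $r_y^\alpha$ during the iteration is at most the new value $r_y^\alpha \leq (3\rho_{\overline{y}})^\alpha = 3^\alpha \rho_{\overline{y}}^\alpha$, where we use that the old radius is non-negative. So the iteration contributes at most $4 \cdot 3^\alpha \rho_{\overline{y}}^\alpha$ to the total increase, and summing over iterations gives $4 \cdot 3^\alpha \sum_{\overline{y} \in \primary} \rho_{\overline{y}}^\alpha \leq 4 \cdot 3^\alpha \sum_{y \in Y} \rho_y^\alpha$. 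The one mild subtlety is that a fixed server $y$ can appear in $\ycluster'_{\overline{y}}$ across several iterations, but since the per-iteration estimate ``increase $\leq$ new$^\alpha$'' applies independently in each iteration, charging them separately is valid. The main obstacle is really just pinning down the two-step triangle-inequality bound of $3\rho_{\overline{y}}$; once that is in hand, the factor $4 \cdot 3^\alpha$ drops out of the counting.
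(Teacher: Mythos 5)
Your proposal is correct and follows essentially the same route as the paper's proof: the same two triangle-inequality bounds ($\|y-x'\|\leq\rho_{\overline{y}}$ and $\|y-\overline{y}\|\leq 2\rho_{\overline{y}}$) yielding the $3\rho_{\overline{y}}$ radius bound, the factor of $4$ from $|\ycluster'_{\overline{y}}|\leq 4$, and the sum over distinct primaries. (Your explicit remark that per-iteration charges for a repeated server add up validly, and your $\leq$ rather than $=$ in the final sum over $\primary$, are both slightly more careful than the paper's write-up.)
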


\begin{proof}
    Let us fix an $\overline{y} \in \primary$, and focus on the iteration when $\overline{y}$ was
added to $\primary$. Notice that there is exactly one such iteration, 
since $\overline{y}$ is removed from $Y'$ in the iteration it gets added to
$\primary$.

We will bound the increase in cost during this iteration. For this, we need
two claims.

\begin{claim}
    For any $x' \in \xcluster_{\overline{y}}$, we have 
    \[ ||\overline{y} - x'|| \leq \rho_{\overline{y}} \]
\end{claim}  

\begin{proof}
Recall that $x'$ is in $\xcluster_{\overline{y}}$ because $x' \in \disk(\overline{y}, \rho_{\overline{y}})$. 
\end{proof}

\begin{claim}
    For any $y' \in \ycluster_{\overline{y}}$, we have 
    \[||y' - \overline{y}|| \leq 2*\rho_{\overline{y}} \]
\end{claim}

\begin{proof}
Let $y'$ be added to $\ycluster_{\overline{y}}$ when $x' \in X'$ was added to $\xcluster_{\overline{y}}$.
Hence
\begin{align*}
   ||y' - x' || &\leq ||x' - y^{\kappa}(x')|| \\
                &\leq \rho_{\overline{y}},
\end{align*}
since $\disk(\overline{y}, \rho_{\overline{y}})$ serves $x'$ in the outer
cover (line 14 of Algorithm~\ref{alg:MC}). 
Also, since $x' \in \disk(\overline{y}, \rho_{\overline{y}})$, $||x' - \overline{y}|| \leq \rho_{\overline{y}}$.
Therefore,
\begin{align*}
   ||y' - \overline{y}|| &\leq ||y' - x'|| + ||x' - \overline{y}|| \\
                         &\leq \rho_{\overline{y}} + \rho_{\overline{y}} \\
                         &= 2\rho_{\overline{y}}
\end{align*}
\end{proof}

Fix a $y \in \ycluster'_{\overline{y}}$. If $r_y$ was increased in this iteration,
it now equals $||y-x'||$ for some $x' \in \xcluster_{\overline{y}}$. By the above
two claims, 
\begin{align*}
   ||y-x'|| &\leq ||y - \overline{y}|| + ||\overline{y} - x'||\\
            &\leq 3*\rho_{\overline{y}}
\end{align*}

Thus the increase in $r_y^{\alpha}$ is at most $3^{\alpha} (\rho_{\overline{y}})^{\alpha}$. Since $r_y$ is increased in this iteration only for $y \in \ycluster'_{\overline{y}}$, and $|\ycluster'_{\overline{y}}| \leq 4$, the increase in the objective
function $\sum_{y \in Y} r_y^{\alpha}$ (in the iteration
of the while loop under consideration) is 
at most $4 \cdot 3^{\alpha} \cdot (\rho_{\overline{y}})^{\alpha}$.

We conclude that the increase in  $\sum_{y \in Y} r_y^{\alpha}$ over all the
iterations of the while loop is at most
\[ 4 \cdot 3^{\alpha} \cdot \sum_{\overline{y} \in \primary} (\rho_{\overline{y}})^{\alpha} 
= 4 \cdot 3^{\alpha} \cdot \sum_{y \in Y} \rho_y^{\alpha}.\]
\end{proof}

We can now bound the approximation ratio of the algorithm.

\begin{lemma}
 Let $r':Y \rightarrow \Real^+$ be any assignment of radii to the points in $Y$ 
    under which each point $x \in X$ is ${\kappa(x)}$-covered. Then the cost of the output
of $\cover(X,Y,\kappa,\alpha)$ is at most $c * \sum_{y \in Y} {r'_y}^{\alpha}$, where $c = 4 \cdot 27^{\alpha}$.
\label{lem:constant}
\end{lemma}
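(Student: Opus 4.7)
The plan is to proceed by induction on $K = \max_{x \in X} \kappa(x)$. The base case $K = 0$ is trivial. For the inductive step, given any $\kappa$-cover $r'$, I will exhibit a partition $Y = S^{*} \sqcup (Y \setminus S^{*})$ and split $r'$ accordingly into $\rho_y = r'_y \cdot [y \in S^{*}]$ and $\hat r_y = r'_y \cdot [y \notin S^{*}]$, so that $\sum_y \hat r_y^{\alpha} + \sum_y \rho_y^{\alpha} = \sum_y (r'_y)^{\alpha}$. The two key properties I will establish are that $\hat r$ is a $\kappa'$-cover and that $3\rho$ is an outer cover for all of $X$ (not merely $X'$). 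Granted these, the induction closes cleanly: by the IH, $\cover(X,Y,\kappa',\alpha)$ costs at most $c \cdot \sum_y \hat r_y^{\alpha}$; by Lemma~\ref{lem:increase} the extension costs at most $4 \cdot 3^{\alpha}$ times the cost of $\oc$'s output, which by Lemma~\ref{lem:oc} is at most $4 \cdot 9^{\alpha}$ times the optimal outer cover for $X'$; and since $X' \subseteq X$, the witness $3\rho$ bounds that by $3^{\alpha} \sum_y \rho_y^{\alpha}$. Chaining, the extension costs at most $4 \cdot 27^{\alpha} \sum_y \rho_y^{\alpha} = c \sum_y \rho_y^{\alpha}$, so the overall total is $c \sum_y (r'_y)^{\alpha}$.

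The partition is constructed by mimicking the coarsening phase of $\oc$ applied to a set of \emph{primary disks} drawn from $r'$. For each $x \in X$, let $D(x) = \{y \in Y : ||x - y|| \leq r'_y\}$; since $|D(x)| \geq \kappa(x)$ and only $\kappa(x) - 1$ servers lie strictly closer to $x$ than $y^{\kappa}(x)$, pigeonhole yields $\max_{y \in D(x)} r'_y \geq ||x - y^{\kappa}(x)||$, and I take $y^{*}(x) \in D(x)$ attaining that maximum as the primary of $x$. Let $T = \{\disk(y^{*}(x), r'_{y^{*}(x)}) : x \in X\}$ and process $T$ in non-increasing order of radius, greedily placing each disk into $F$ iff it is disjoint from every disk already in $F$. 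Finally set $S^{*} = \{y : \disk(y, r'_y) \in F\}$.

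Both key properties then follow directly. For the $\kappa'$-cover property: if two distinct $y_1, y_2 \in D(x) \cap S^{*}$ existed, the disks $\disk(y_1, r'_{y_1})$ and $\disk(y_2, r'_{y_2})$ would both contain $x$ and hence intersect, contradicting the pairwise disjointness of $F$; so $|D(x) \cap S^{*}| \leq 1$ and each $x$ retains at least $\kappa(x) - 1$ covering disks in $\hat r$. For the outer-cover property of $3\rho$: the primary $\disk(y^{*}(x), r'_{y^{*}(x)})$ is either in $F$ (so $y^{*}(x)$ serves $x$ directly in $3\rho$) or was blocked by some $\disk(y', r'_{y'}) \in F$ with $r'_{y'} \geq r'_{y^{*}(x)}$ intersecting it; in the latter case $||y^{*}(x) - y'|| \leq 2 r'_{y'}$, so by triangle inequality $||x - y'|| \leq r'_{y^{*}(x)} + 2 r'_{y'} \leq 3 r'_{y'} = 3\rho_{y'}$, and $3\rho_{y'} \geq 3 r'_{y^{*}(x)} \geq ||x - y^{\kappa}(x)||$.

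The main obstacle, and the reason the coarsening step is needed, is that the naive choice $S^{*} = \{y^{*}(x) : x \in X\}$ can yield $|D(x) \cap S^{*}| \geq 2$ (for instance when the primary of some other client $x'$ happens to also lie in $D(x)$), breaking the $\kappa'$-cover property. The $\oc$-style coarsening repairs this automatically, since pairwise-disjoint disks cannot both cover any common point; its price is the $3 \times$ expansion in the outer-cover direction, contributing the extra $3^{\alpha}$ factor that turns the $4 \cdot 9^{\alpha}$ of Lemmas~\ref{lem:increase} and~\ref{lem:oc} into the final $c = 4 \cdot 27^{\alpha}$.
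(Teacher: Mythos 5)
Your proposal is correct and follows essentially the same route as the paper: the same greedy coarsening of the ``largest covering disk'' primaries to get a pairwise-disjoint subfamily whose removal still leaves a $\kappa'$-cover, the same $3\times$ expansion to witness an outer cover, and the same chaining through Lemma~\ref{lem:increase} and Lemma~\ref{lem:oc} to get $c = 4\cdot 27^{\alpha}$. The only (harmless) difference is that you build primaries for all of $X$ rather than just the residual set $X'$, which yields an outer cover for all of $X$ --- slightly stronger than needed but equally valid as an upper bound on the optimal outer cover for $X'$.
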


\begin{proof}
Our proof is by induction on $\max_{x \in X} \kappa(x)$. For the base case, where $\kappa(x) = 0$ for each $x \in X$, the claim in
the theorem clearly holds. 

Let $D = \{\disk(y, r'_y) \ | \ y \in Y\}$ be the set of disks corresponding
to the assignment $r'$. Our proof strategy is to show that there is a subset $D_{\kappa} \subseteq D$ such that

\begin{enumerate}
   \item The cost increase incurred by $\cover(X,Y,\kappa,\alpha)$ in going from
the $\kappa'$-cover to the $\kappa$-cover is at most $c$ times cost of the disks in $D_k$. (Recall that the \textit{cost} of a set of  
disks is the sum of the $\alpha$-th powers of the radii of the disks.)
   \item The set of disks, $D \setminus D_{\kappa}$, $\kappa'(x)$-covers any point $x \in X$. 
\end{enumerate}
 
By the induction hypothesis, the cost of the $\kappa'$-cover computed by
$\cover(X,Y,\kappa,\alpha')$ is at most $c$ times the cost of the disks in $D \setminus D_{\kappa}$. As
the increase in cost incurred by $\cover(X,Y,\kappa,\alpha)$ in turning the $\kappa'$-cover
to a $\kappa$-cover is at most $c$ times the cost of the disks in $D_{\kappa}$, the theorem follows.

We now describe how $D_k$ is computed, and then establish that it has the above
two properties. For each $x' \in X'$, let $\largest(x')$
be the largest disk from $D$ that contains $x'$. Since $x'$ is $\kappa(x')$-covered by $D$, we note that the radius of $\largest(x')$ is at least $||x' - y^{\kappa}(x')||$. Let 
\[ D'_{\kappa} = \{ \largest(x') \  | \ x' \in X' \}.\]

Sort the disks in $D'_{\kappa}$ by decreasing (non-increasing) radii. Let $B
\leftarrow \varnothing$ initially. For each disk $d \in D'_{\kappa}$ in the sorted
order, performing the following operation: add $d$ to $B$ if $d$ does
not intersect any disk already in $B$.

Let $D_{\kappa}$ be the set $B$ at the end of this computation. Since no two
disks in $D_{\kappa}$ intersect, and $D$ $\kappa$-covers any point in $X$, it follows
that $D \setminus D_{\kappa}$ $\kappa '$-covers any point in $X$. This establishes
Property 2 of $D_{\kappa}$.

We now turn to Property 1. For this, consider $L_{\kappa}$, the set of disks obtained by increasing the
radius of each disk in $D_{\kappa}$ by a factor of 3. We argue that $L_{\kappa}$ is
an outer cover for $X'$. Fix any $x' \in X'$.
\begin{enumerate}
\item If $\largest(x') \in D_{\kappa}$, then the corresponding disk in 
$L_{\kappa}$ contains $x'$ and has radius at least $||x' - y^{\kappa}(x')||$.
\item If $\largest(x') \not\in D_{\kappa}$, then there is an even larger disk in 
$D_{\kappa}$ that intersects  $\largest(x')$. The corresponding disk in $ L_{\kappa}$ contains $x'$ and has radius at least $||x' - y^{\kappa}(x')||$.
\end{enumerate}

Since $L_{\kappa}$ is
an outer cover for $X'$, and the procedure 
$\oc(X',Y,\kappa,\alpha)$ returns a $3^{\alpha}$ approximation
to the optimal outer cover, we infer that
\[ \sum_{y \in Y} \rho_y^{\alpha} \leq 3^{\alpha} \cdot \text{cost}(L_{\kappa})
\leq 9^{\alpha} \cdot \text{cost}(D_{\kappa}).\]

Thus the cost increase incurred by $\cover(X,Y,\kappa,\alpha)$ in going from
the $\kappa'$-cover to the $\kappa$-cover is, by Lemma~\ref{lem:increase}, at most 
\[ 4 \cdot 3^{\alpha} \cdot \sum_{y \in Y} \rho_y^{\alpha} \leq 
   4 \cdot 27^{\alpha} \cdot \text{cost}(D_{\kappa}) = c \cdot \text{cost}(D_{\kappa}).\]

This establishes Property 1, and completes the proof of the lemma. 
\end{proof}

We conclude with a statement of the main result of this article. In this statement, cost refers to $l_2$ rather than $l_{\infty}$ disks. Since (a) an $l_2$ disk of radius $r$ is contained in the corresponding $l_{\infty}$ disk of radius 
$r$, and (b) an $l_{\infty}$ disk of radius $r$ is contained in an $l_2$ disk
of radius $\sqrt{2}r$, the approximation guarantee is increased by
$(\sqrt{2})^{\alpha})$ when compared to Lemma~\ref{lem:constant}. 

\begin{theorem}
\label{thm:main}
Given point sets $X$ and $Y$ in the plane, a coverage function $\kappa: X \rightarrow \{0,1,2,\ldots, |Y|\}$, and $\alpha \geq 1$, the algorithm $\cover(X, Y, \kappa, \alpha)$ runs in polynomial time and computes a $\kappa$-cover of $X$ with cost at most $4 \cdot (27 \sqrt{2})^{\alpha}$ times that of the optimal $\kappa$-cover.
\end{theorem}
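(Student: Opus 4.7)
My plan is to obtain the theorem as a direct corollary of Lemma~\ref{lem:constant}, which already gives a $4 \cdot 27^{\alpha}$ approximation when disks are measured in the $\Linfty$ norm. The only remaining task is to translate this guarantee into one for the $l_2$ problem, and for that I would exploit the two containment relations between $\Linfty$ and $l_2$ disks that the authors highlight right before the theorem statement: every $l_2$ disk of radius $r$ sits inside the $\Linfty$ disk of the same radius and center, and every $\Linfty$ disk of radius $r$ sits inside the $l_2$ disk of radius $\sqrt{2}\,r$ with the same center.

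The first step is to apply the algorithm $\cover(X,Y,\kappa,\alpha)$ using $\Linfty$ distances, which by Lemma~\ref{lem:constant} produces an $\Linfty$ assignment $r: Y \to \Real^+$ whose $\Linfty$ cost is at most $4 \cdot 27^{\alpha}$ times that of any $\Linfty$ $\kappa$-cover. I would then compare this against the optimal $l_2$ solution $r^{\ast}$. Because any $l_2$ disk of radius $r^{\ast}_y$ is contained in the $\Linfty$ disk of the same radius, the assignment $r^{\ast}$ is itself a valid $\Linfty$ $\kappa$-cover, with identical cost $\sum_y (r^{\ast}_y)^{\alpha}$. Hence the $\Linfty$ cost of the computed $r$ is at most $4 \cdot 27^{\alpha} \cdot \sum_y (r^{\ast}_y)^{\alpha}$, i.e.\ at most $4 \cdot 27^{\alpha}$ times the optimal $l_2$ cost.

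Next, to convert the computed $\Linfty$ solution into a valid $l_2$ solution, I would define a new assignment $\tilde r_y = \sqrt{2}\, r_y$. Since the $\Linfty$ disk $\disk(y, r_y)$ is contained in the $l_2$ disk of radius $\sqrt{2}\, r_y$ centered at $y$, any client $x$ that is $\kappa(x)$-covered under $r$ in $\Linfty$ remains $\kappa(x)$-covered under $\tilde r$ in $l_2$. The $l_2$ cost of $\tilde r$ is $\sum_y (\sqrt{2}\, r_y)^{\alpha} = (\sqrt{2})^{\alpha} \sum_y r_y^{\alpha}$, which by the previous paragraph is at most $4 \cdot (27\sqrt{2})^{\alpha}$ times the optimal $l_2$ cost. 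The polynomial running time is inherited from Lemma~\ref{lem:oc} and the structure of $\cover$, each recursive invocation of which performs polynomially many operations and invokes $\oc$ once, with recursion depth bounded by $\max_{x \in X} \kappa(x) \leq |Y|$.

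I do not anticipate any genuine obstacle here, since all the substantive work is bundled in Lemma~\ref{lem:constant}; the one subtlety to be careful about is that the norm conversion should be applied on the output side (scaling up) rather than the input side, so that the $(\sqrt{2})^{\alpha}$ factor appears multiplicatively exactly once. It is worth noting that Lemma~\ref{lem:constant}'s $r'$ is an arbitrary feasible solution, so its hypothesis is satisfied by the optimal $l_2$ solution reinterpreted as an $\Linfty$ solution, which is what makes the one-sided use of the norm comparison legitimate.
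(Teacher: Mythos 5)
Your proposal is correct and follows essentially the same route as the paper: the theorem is derived from Lemma~\ref{lem:constant} by observing that the optimal $l_2$ cover is a feasible $l_{\infty}$ cover of the same cost, and that scaling the computed $l_{\infty}$ radii by $\sqrt{2}$ yields a valid $l_2$ cover at a multiplicative cost of $(\sqrt{2})^{\alpha}$. Your treatment is in fact slightly more explicit than the paper's one-paragraph remark, and the note about applying the norm conversion only on the output side is exactly the right point of care.
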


\section{Concluding Remarks}
\label{sec:conclusions}
Our result generalizes to the setting where $X$ and $Y$ are points in 
$\Real^d$, where $d$ is any constant. The approximation guarantee is now
$(2d) \cdot (27 \sqrt{d})^{\alpha}$. To explain, the intersection of a
finite family of $l_{\infty}$ balls equals the intersection of a sub-family
of at most $2d$ balls. That is why the $4$ in the approximation guarantee
of Theorem~\ref{thm:main} becomes $2d$. In the transition from $l_2$ to
$l_{\infty}$ balls in $\Real^d$, we lose a factor of $(\sqrt{d})^{\alpha}$. 

This generalization naturally leads to the next question -- what can we say
when $X$ and $Y$ are points in an arbitrary metric space? Our approach
confronts a significant conceptual obstacle here, since one can easily 
construct examples in which the cost of going from a $(k-1)$-cover to
a $k$-cover (for the uniform MCMC) cannot be bounded by a constant times the
cost of an optimal outer cover. Thus, new ideas seem to be needed for
obtaining an $O(1)$ approximation for this problem. The work of \cite{Yehuda2013Note} gives the best known guarantee
of $O(k)$. For the non-uniform version, their approximation guarantee is
$O(\max \{\kappa(x) \mid x \in X\})$.

\end{document}